\documentclass[letterpaper, 10 pt, conference]{ieeeconf}  

\IEEEoverridecommandlockouts                              
\usepackage{graphicx}
\usepackage{amsmath}
\usepackage{ntheorem}
\usepackage{amssymb}
\usepackage{tabularx}
\usepackage{mathtools}
\usepackage{amsfonts}
\usepackage{color}
\usepackage{subfigure}
\usepackage{algorithm}
\usepackage{uri}
\usepackage{textcomp}
\usepackage{siunitx}

\definecolor{aoenglish}{rgb}{0.0, 0.5, 0.0}
\definecolor{darkblue}{rgb}{0.0, 0.0, 0.55}
\definecolor{darkmagenta}{rgb}{0.55, 0.0, 0.55}
\definecolor{electricviolet}{rgb}{0.56, 0.0, 1.0}
\definecolor{electricyellow}{rgb}{1.0, 1.0, 0.0}
\definecolor{forestgreen}{rgb}{0.13, 0.55, 0.13}
\definecolor{fuchsia}{rgb}{1.0, 0.0, 1.0}
\definecolor{gamboge}{rgb}{0.89, 0.61, 0.06}
\definecolor{goldenpoppy}{rgb}{0.99, 0.76, 0.0}
\definecolor{indigo}{rgb}{0.29, 0.0, 0.51}
\definecolor{internationalorange}{rgb}{1.0, 0.31, 0.0}
\definecolor{lava}{rgb}{0.81, 0.06, 0.13}
\definecolor{selectiveyellow}{rgb}{1.0, 0.73, 0.0}
\definecolor{turquoiseblue}{RGB}{0,191,255}
\definecolor{darkgreen}{RGB}{0, 100, 0}

\usepackage{xcolor}

\renewenvironment{proof}{\paragraph*{Proof.}}{\hfill$\square$}

\newtheorem{assumption}{Assumption}
\newtheorem{lemma}{Lemma}

\newtheorem{proposition}{Proposition}

\newtheorem{remark}{Remark}

\let\labelindent\relax
\usepackage{enumitem}
\renewcommand{\theenumi}{\arabic{enumi}}

\renewcommand{\theenumii}{\arabic{enumii}}

\renewcommand{\theenumiii}{\arabic{enumiii}}

\renewcommand{\theenumiv}{\arabic{enumiv}}

\newcommand{\E}{{\mathbb E}}

\newcommand{\Var}{\operatorname{Var}}

\usepackage{algpseudocode}
\usepackage{caption}
\usepackage{cite}

\title{\LARGE \bf
Forgetting While Remembering, \\ an Invariant Online Data-Driven Predictive Control Formulation
}

\author{Alessandro Chiuso\thanks{A. Chiuso is with the Department of Information Engineering, University of Padova, Italy. Email address: {\tt \small alessandro.chiuso@unipd.it}}, 
Florian D\"{o}rfler\thanks{F. Dörfler is with the Automatic Control Laboratory, Swiss Federal Institute of Technology (ETH), Switzerland. Email address: {\tt\small dorfler@control.ee.ethz.ch}.}, and 
Keith Moffat\thanks{K. Moffat is with the Department of Electrical and Electronic Engineering, University of Melbourne, Australia. Email address: {\tt\small keith.moffat@unimelb.edu.au}}
}

\begin{document}

\maketitle
\thispagestyle{empty}
\pagestyle{empty}


\begin{abstract}
Low signal-to-noise ratio (SNR) data is a core challenge of online Data-Driven Predictive Control (DPC) for linear, time-varying systems. 
This paper proposes a Bayesian, online DPC framework based on autoregressive models with exogenous inputs (ARX) that uses an externally-provided prior, which encodes inductive bias such as smooth system dynamics and stability, to safeguard performance when SNR is low. 
The posterior estimate of the ARX parameter is propagated forward in time using a Kalman filter with a state equation defined by an adaptation-rate hyperparameter, which is adjusted online to track the rate at which the underlying system dynamics evolve.
The Kalman filter's posterior mean and covariance determine the DPC's Final Control Error cost function, which is the posterior expectation of the quadratic cost function.
Critically, the Kalman filter's process equation is chosen so that, regardless of the hyperparameter adaptation, the prior distribution is invariant over time.
Thus, while old data is forgotten, the 
prior is not.
DPC tracking experiments on a time-varying second-order system demonstrate the efficacy of the proposed method. 
\end{abstract}
\section{INTRODUCTION}

Data-Driven Predictive Control (DPC) has gathered attention in recent years due to its ease of implementation 
\cite{Coulson:2019, Dorfler:2023, chiuso2025harnessing, Breschi:2023, Berberich:2020, Lazar:2024, Verheijen:2023, MoffatCDC2024}.
Much of the literature has focused on the linear, time-invariant setting, in which the system's behavior can be modeled as a linear combination of sufficiently-exciting data \cite{Willems2005}.
In this paper, we focus on adapting the controller in the linear, time-varying setting.

Like other Adaptive Control formulations, Online DPC is challenging because the appropriate adaptation rate is system-dependent and time-varying.
When the system changes quickly, relying too heavily on old data can produce an incorrect system representation and degrade control performance. 
Conversely, if the system changes slowly, relying too heavily on recent data can produce online training data SNR deficiencies, loss of stability, and robustness problems such as bursting phenomena \cite{ANDERSON1985}.

Focusing on adaptive predictive control approaches, \cite{Wang2025} introduces a change detection mechanism which triggers the collection and use of fresh data, based on which a controller is re-designed. 
A recursive Data-Enabled Predictive Control (DeePC, \cite{Coulson:2019}) is proposed in \cite{Shi2024}. 
Similarly, \cite{Berberich:2022} uses the recent input-output data to form a local-linear approximation of the system dynamics. 
The approach in \cite{Berberich:2022} is subspace-based and can therefore produce biased predictions when applied to noisy closed-loop data in the online/adaptive context \cite{ljung1996subspace, MoffatECC2024}. 
In contrast, \cite{Shi2024} addresses this issue through a two-level optimization problem that constructs multi-step predictions from a single-step predictor.
However, neither method uses an adaptive forgetting factor, and thus both are susceptible to low-SNR data issues.

Bayesian recursive-in-time parameter estimation, including the Kalman filter, provides an alternative framework for model adaptation. 
It has a long history, including, for example, \cite{sarris1973bayesian,MehraTAC1970,MehraTAC1972}, and we refer the reader to \cite{BarShalomIEEEAcces2020} for a recent overview of the literature.
Recursive Least Squares can be viewed as a special case of the Kalman filter, where the forgetting factor is  replaced by the  process noise variance \cite{sayed1994state, lai2024adaptive}.
The literature contains many examples of adjusting the forgetting rate/process noise online, based on observed data \cite{MehraTAC1970, Fortescue1981VariableForgetting, Bruce2020VariableRateForgetting}, including methods that adjust the forgetting factor online by marginal-likelihood optimization \cite{prando2016online}. 

Variable forgetting rates \cite{MehraTAC1970, Fortescue1981VariableForgetting, Bruce2020VariableRateForgetting, prando2016online} can help address low SNR by reducing the forgetting rate when the dynamics vary slowly.
Another way to improve estimation performance with low SNR data is to use a prior. 
Encoding inductive bias (e.g., the system is smooth and/or stable) as a prior anchors the Bayesian estimate, reducing the impact of measurement variance.

We propose an adaptive Kalman filter ARX parameter estimation algorithm with the following characteristics.
\textit{First}, rather than adapting the process noise (equivalent to the forgetting factor in the Recursive Least Squares context), as is common in the literature \cite{MehraTAC1970, Fortescue1981VariableForgetting, Bruce2020VariableRateForgetting, prando2016online}, 
we \textit{adapt the process equation online to ``forget'' older data.} 
\textit{Second}, the adaptive Kalman filter is designed so that, regardless of how the process equation is adapted, \textit{the prior distribution of the model parameter is invariant}. 
To the best of our knowledge, the proposed Kalman filter parameter estimator with an invariant prior and adaptive process equation is novel in the context of adaptive filtering and control.


In addition to the parameter estimator, we also employ a Bayesian DPC optimization. 
We build off the work in \cite{chiuso2025harnessing}, which introduced the Final Control Error (FCE) for an uninformative prior and showed that the mean and variance of the ARX parameters are sufficient for quadratic FCE-minimizing predictive control.
Conveniently, the Kalman filter updates  the posterior mean and variance of the ARX parameters, which can be directly plugged into the FCE-minimizing DPC optimization.
As this formulation is based on a single-step-predictor, it does not suffer from closed-loop bias issues  \cite{ljung1996subspace, CHIUSOAuto2007, MoffatECC2024}.

Summarizing, this paper proposes an online DPC algorithm for linear, time-varying systems which 
\begin{enumerate}
    \item recursively updates  the predictor mean and variance using a Kalman filter, 
    \item updates the Kalman filter process equation using Empirical Bayes-based gradient updates, and 
    \item parametrizes the predictive control cost function to minimize the FCE (the posterior expectation of the quadratic predictive control cost).
\end{enumerate}

\section{PROBLEM STATEMENT}

We consider the problem of controlling a linear, time-varying, and stochastic system 
with input $u_t \in \mathbb{R}$ and output $y_t \in \mathbb{R}$, $t\in {\mathbb Z}$.
We present the SISO case for clarity, recognizing that the construction extends to MIMO systems.

We define the joint input-output variable
\begin{equation*}
z_t:=\begin{bmatrix}
y_t\\
u_t
\end{bmatrix} \in \mathbb{R}^{2},
\end{equation*}
and focus on a receding horizon control problem of the following form. 
Given a control horizon $T$, define the future input and output sequences
\begin{equation*}
\begin{array}{rcl}
u_f&:=&\begin{bmatrix}
u_t^{\top} &
u_{t+1}^{\top} &
\cdots &
u_{t+T-1}^{\top}
\end{bmatrix}^{\top},
\\
y_{f}&:=&\begin{bmatrix}
y_t^{\top} & y_{t+1}^{\top} & \cdots & y_{t+T-1}^{\top}
\end{bmatrix}^{\top},
\end{array}
\end{equation*}
where $y_r\in\mathbb{R}^T$ is the desired output.
We are interested in minimizing the ground-truth cost
\begin{equation}\label{eq:cost}
\mathcal{J}(u_f,y_f):= \|y_f - y_r\|^2 +  \frac{1}{q}\| u_f\|^2, \quad q>0.
\end{equation}
In the data-driven setting, a model for the system is not available and, at any time $t$, past data 
$${\cal D}^{-}_t:=\{z_\tau\}_{\tau = 1,..,t-1}$$ 
is used to predict $y_f$ as a function of $u_f$.

In contrast to a certainty-equivalent cost, 
we occupy ourselves with the conditional expectation of the cost \eqref{eq:cost}, 
\begin{equation}\label{eq:FCE}
F_t(u_f): = {\mathbb E}[\mathcal{J}(u_f,y_f)|{\cal D}^{-}_t].
\end{equation}
This conditional-on-data expected cost was named the FCE in \cite{chiuso2025harnessing}, which investigated the offline problem.
Conditioning on data ${\cal D}^{-}_t$, \eqref{eq:FCE} implies that $y_f$ is distributed according to the dynamics \eqref{eq:predictorTV_ARX} that ${\cal D}^{-}_t$ has revealed. 
For the prior, we consider the class of linear, possibly time-varying systems described with a one-step-ahead predictor, 
\begin{align}\label{eq:predictorTV_ARX}
 y_t &= \sum_{k=1}^{\hat\rho} \phi_{t,k} z_{t-k} + e_t,
\end{align}
where $e_t$ is the innovation and
the predictor-length $\hat\rho$ is chosen using the Akaike Information Criterion (AIC) \cite{Akaike1974}, as discussed in \cite{chiuso2025harnessing}, and, for the SISO setting, $\phi_{t,k}\in\mathbb{R}^{1\times 2}$.
We assume, for simplicity, that  $\hat\rho$ is constant over time.
Regarding the prediction parameter $\phi_{t,k}$, the subscript $k$ indicates the number of time steps into the past, while the subscript $t$ indicates that the parameter is time-varying.

Theorem 1 in \cite{chiuso2025harnessing} demonstrated that, for linear, time-invariant systems, the FCE admits an asymptotic decomposition into a certainty-equivalent cost and a quadratic regularization term accounting for the uncertainty-induced cost.
For the time-varying case, we use the corresponding approximation
\begin{equation}\label{eq:sepPrinc}
F_t(u_f)
=
\mathbb{E}[\mathcal{J}(u_f,y_f)\mid\mathcal{D}_t^-]
\approx
J_t(u_f)+r_t(u_f),
\end{equation}
where $J_t(u_f)$, defined later in \eqref{eqn:JtDef}, is the certainty-equivalent cost and $r_t(u_f)$ is the regularization term.


We define the ARX parameter vector 
\begin{align*}
    \theta_t := [\phi_{t,\hat\rho} \quad \cdots \quad \phi_{t,1} ]^\top,
\end{align*}
\begin{assumption}
    The externally-provided prior $\theta_t \sim \mathcal{N}(0,Q)$. 
\end{assumption}
The Gaussian prior ensures that the Kalman Filter produces the exact posterior mean and covariance.
The zero-mean assumption is made for notational simplicity; the method extends directly to a nonzero prior mean.

\noindent \textbf{Problem Formulation}:
At each moment $t$, given data ${\cal D}^{-}_t$ and the prior variance  on the predictor coefficients $Q$, determine the control action $u_f$ that minimizes the FCE $F_t(u_f)$.


\section{Forgetting While Remembering: \\
An Invariant Adaptation Model}

Before describing the adaptive Kalman filter formulation, we consider two extreme cases of time-varying systems: 
\begin{enumerate}
    \item \textit{A white noise system}: $\theta_t$ is temporally uncorrelated such that $\theta_t$ provides no information on $\theta_{t+1}$. 
    \item \textit{A time-invariant system}: $\theta_t$ is constant in time and thus $\theta_t = \theta_{t+1}$. 
\end{enumerate}
Each extreme case has modeling benefits and weaknesses---Estimating a parameter using the white-noise system model allows quick adaptation when the underlying system changes, but only uses the most recent measurement.
Estimating a parameter using the time-invariant assumption, on the other hand, weighs all data equally and thus does not adapt effectively when the underlying system changes.
We propose a method that interpolates between these two extreme cases in an adaptive, data-driven manner in order to adjust to time-varying evolution rates.


\subsection{System Prior}

We assert the following process equation for the ARX parameter:
\begin{equation}
\label{eq:ssprior}
\begin{array}{rcl}
\theta_{t+1} & = & \alpha \theta_t + \left(\sqrt{1-\alpha^2}\right) w_t, \quad \alpha \in [0,1],
\end{array}
\end{equation}
where $\alpha$ is the parameter that determines the process equation, and $w_t$ is zero-mean white noise with variance $Q$. 

All process equations assert a prior on $\theta_{t}$. The commonly-used random walk (\cite{MehraTAC1970, Fortescue1981VariableForgetting, Bruce2020VariableRateForgetting}) asserts a prior that is forgotten over time: 
$\mathrm{Var}(\theta_t) \rightarrow \infty$ as $t \rightarrow \infty$. 
On the other hand, as stated in the following proposition,
\eqref{eq:ssprior} asserts a zero mean prior that is invariant (not forgotten over time). 
The prior's variance, $\mathrm{Var}(\theta_t)$, encodes the correlation between parameters and can be used to assert properties such as smoothness or stability.




\begin{proposition}\label{prop:invarance}
If $\theta_0\sim \mathcal{N}(0,Q)$, $w_t \sim \mathcal{N}(0,Q)$ is Gaussian white noise independent of $\theta_0$, then, given \eqref{eq:ssprior}, $\theta_t \sim \mathcal{N}(0,Q)$ for all \(t\).
\end{proposition}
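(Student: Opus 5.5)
The plan is induction on $t$, using only that $\theta_t$ is a linear function of jointly Gaussian variables. For the base case, $\theta_0\sim\mathcal{N}(0,Q)$ holds by hypothesis. For the inductive step, I assume $\theta_t\sim\mathcal{N}(0,Q)$ and show the same for $\theta_{t+1}$ from \eqref{eq:ssprior}.

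The key preliminary observation is that $\theta_t$ is independent of $w_t$. Unrolling \eqref{eq:ssprior} gives
\begin{equation*}
\theta_t=\alpha^t\theta_0+\sqrt{1-\alpha^2}\sum_{s=0}^{t-1}\alpha^{t-1-s}w_s .
\end{equation*}
So $\theta_t$ is a measurable (indeed linear) function of $(\theta_0,w_0,\dots,w_{t-1})$. Since $w$ is white Gaussian and independent of $\theta_0$, I read the hypothesis as saying that $\theta_0,w_0,w_1,\dots$ are mutually independent Gaussian vectors. Then $w_t$ is independent of $\theta_t$, and $(\theta_t,w_t)$ is jointly Gaussian with block-diagonal covariance $\mathrm{diag}(Q,Q)$.

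With this in hand, $\theta_{t+1}$ is a linear map of the jointly Gaussian pair $(\theta_t,w_t)$, so it is Gaussian. Its mean is
\begin{equation*}
\alpha\,\mathbb{E}[\theta_t]+\sqrt{1-\alpha^2}\,\mathbb{E}[w_t]=0 .
\end{equation*}
The cross-covariance term vanishes by independence, so its covariance is
\begin{equation*}
\alpha^2 Q+(1-\alpha^2)Q=Q .
\end{equation*}
This closes the induction. The endpoints $\alpha=0$ (white noise system) and $\alpha=1$ (time-invariant system) are covered by the same computation.

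The only point needing care is the independence and joint Gaussianity used in the cross term. That comes from interpreting ``Gaussian white noise independent of $\theta_0$'' as mutual independence of the whole collection $\{\theta_0,w_0,w_1,\dots\}$, rather than just pairwise independence of $w_t$ and $\theta_0$. Under that interpretation the rest is routine. As a remark, the explicit sum also gives the covariance directly: $\alpha^{2t}Q+(1-\alpha^2)\sum_{j=0}^{t-1}\alpha^{2j}Q=Q$ by the geometric series, which serves as a non-inductive check.
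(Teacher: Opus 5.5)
Your proof is correct and takes the same route as the paper: induction on $t$, Gaussianity preserved under the affine map, and the recursions $\mathbb{E}[\theta_{t+1}]=\alpha\,\mathbb{E}[\theta_t]$ and $\Var\{\theta_{t+1}\}=\alpha^2\Var\{\theta_t\}+(1-\alpha^2)Q$. Your use of the unrolled sum to show that $\theta_t$ and $w_t$ are independent and jointly Gaussian makes explicit what the paper uses silently when it drops the cross term, and your geometric-series computation is a valid non-inductive check.
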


\begin{proof}
$\theta_{t+1}$ is an affine function of $\theta_t$ and $w_t$.
Since $\theta_0$ and $w_t$ are Gaussian, it follows by induction that $\theta_t$ is Gaussian for all $t$.

Mean: 
$\mathbb{E}[\theta_{t+1}] = \alpha \mathbb{E}[\theta_t]$.
As $\mathbb{E}[\theta_0] = 0$, $\mathbb{E}[\theta_t] = 0$.

Variance:
$
    \Var\{\theta_{t+1}\} = \alpha^2 \Var\{\theta_{t}\} + (1-\alpha^2) \Var\{w_t\}.
$
The result follows from $\Var\{\theta_{0}\} = \Var\{w_t\} = Q$.
\end{proof}

Thus, by modeling the evolution of the parameter $\theta_t$ with \eqref{eq:ssprior} and setting $\Var\{\theta_0\} = Q$, and  $\Var\{w_t\} = Q$, the externally-provided prior variance $Q$ applies at all times $t$. 

\begin{remark}
    The parameter $\alpha \in [0,1]$ models how fast  $\theta_t$ varies. 
    Large $\alpha$ implies that the system is slowly-time-varying with the time-invariant $\theta_{t+1} = \theta_t$ case at the $\alpha=1$ limit.
    Small $\alpha$ implies that the system is rapidly-time-varying with the white noise $\E[\theta_{s} \theta_t^\top ] = Q \delta_{t-s}$ case at the $\alpha=0$ limit.
\end{remark}

\noindent In Section \ref{sec:Adapt:hyperparam} we describe how the parameter $\alpha$ is estimated on-line, thus the adaptation speed is learned from data. 

\subsection{Recursive, Bayesian ARX parameter estimation}

The parameter model \eqref{eq:ssprior}, paired with the predictor equation \eqref{eq:predictorTV_ARX}, yields a (time varying) linear state space model of the form
\begin{equation}
\label{eq:ssKalman}
\begin{array}{rcl}
\theta_{t+1} & = & \alpha \theta_t + \left(\sqrt{1-\alpha^2}\right) w_t, \\
y_t & = & C_t \theta_t + e_t,
\end{array}
\end{equation}
where $C_t := \begin{bmatrix}
z_{t-\hat\rho}^\top & \cdots & z_{t-1}^\top
\end{bmatrix}$ is the matrix formed from the most recent input-output data at time $t$ such that 
$$
C_t \theta_t = \sum_{k=1}^{\hat\rho} \phi_{t,k} z_{t-k},
$$ 
and $e_t \sim \mathcal{N}(0,\sigma^2)$ is white Gaussian noise, independent of $\theta_t$ and the process noise $w_t$, which models the unavoidable one-step-ahead prediction error $y_t-C_t\theta_t$.

Interpreted as a Kalman filter, $e_t$ is the ``measurement noise'' and $w_t$ is the ``process noise.''
The conditional mean and variance of $\theta_t$ can be estimated recursively with the Kalman filter as follows, 
\begin{equation}\label{eq:Kalman:update}
\begin{array}{rcl}
\Lambda_t & = & C_t \Sigma_{t|t-1} C_t^\top + \sigma^2, \\
K_t  & = & \Sigma_{t|t-1} C_t^\top \Lambda^{-1}_t, \\
\hat \theta_{t|t} &=& \hat \theta_{t|t-1} + K_{t}\left(y_t - C_t\hat\theta_{t|t-1}\right), \\
\Sigma_{t|t} & = & \Sigma_{t|t-1} - K_t \Lambda_t K_t^\top, \\
\hat \theta_{t+1|t} &=& \alpha \hat \theta_{t|t}, \\
\Sigma_{t+1|t} &=& \alpha^2\Sigma_{t|t}
+ (1-\alpha^2) Q ,
\end{array}
\end{equation}
where $\hat \theta_{t|t-1}$ and $\hat\theta_{t|t}$ denote the predicted and filtered estimates of $\theta_t$ using measurements up to time $t-1$ and $t$, respectively, and  
$\Sigma_{t|t-1}$, $\Sigma_{t|t}$ are their respective covariances. 
Summarizing the relevant variances:
\begin{flushleft}
\begin{tabularx}{\columnwidth}{@{}p{0.34\columnwidth}>{\raggedright\arraybackslash}X@{}}
$Q = \operatorname{Var}(w_t)$
& Covariance of the process noise. Also the variance of $\theta_t$ prior to any measurements (see Prop. \ref{prop:invarance}). \\

$\Sigma_{t|t-1} \!\!=\!\! \operatorname{Var}(\theta_t \!\!\mid\!\! {\cal D}_t^-)$
& 
Covariance of $\theta_t$ before seeing $y_t$. \\

$\Sigma_{t|t} = \operatorname{Var}(\theta_t \!\mid\! {\cal D}_t^-,y_t)$
& 
Covariance of $\theta_t$ after seeing $y_t$. \\

$\sigma^2 = \operatorname{Var}(e_t)$
& Prediction error variance (oracle/unavoidable error variance). \\

$\Lambda_t = \operatorname{Var}(\epsilon_t)$ 
& Innovation variance 
(non-oracle prediction error variance).
\end{tabularx}
\end{flushleft}
We define the Kalman filter innovation, emphasizing its dependence on $\alpha$, as
\begin{align*}
\epsilon_t(\alpha) := y_t - C_t \alpha \hat\theta_{t-1|t-1},
\end{align*}
which differs from the oracle prediction error $e_t$, as $\epsilon_t$ depends on $\hat\theta_{t|t-1}$, rather than the true $\theta_t$.
The filter innovation variance is 
\begin{align*}
\Lambda_t(\alpha) = C_t\!\left(
\alpha^2\Sigma_{t-1|t-1}+(1-\alpha^2)Q
\right)\!C_t^\top+\sigma^2.
\end{align*}
In this paper, to simplify notation, we restrict the class of systems to SISO systems, thus $\sigma^2$ and $\Lambda_t$ are scalars, whereas $Q$, $\Sigma_{t|t-1}$, and $\Sigma_{t|t}$ are $2\hat{\rho}\times 2\hat{\rho}$ matrices. 

\begin{remark}
    The Kalman filter \eqref{eq:Kalman:update} propagates the posterior belief forward in time. 
    Because the  prior is zero-mean, the process equation pulls $\hat \theta$ towards zero at each time step: $\hat{\theta}_{t|t-1} = \alpha \hat{\theta}_{t-1|t-1}$, $\alpha \in [0,1]$. 
    This pull is counteracted by the Kalman filter's measurement update: $\hat{\theta}_{t|t} = \hat{\theta}_{t|t-1} + K_t\bigl(y_t - C_t \hat{\theta}_{t|t-1}\bigr)$, which pushes the ARX parameter estimate to describe the observed data $C_t$. 
\end{remark}

\subsection{Recursive Final Control Error}

We build a $T$-step predictor, as in \cite{chiuso2025harnessing}, by recursively applying the one-step ARX predictor with coefficients $\hat{\theta}_{t|t-1}$. At each step, the candidate input and predicted output are appended to the $\hat{\rho}$-long data window, and the oldest pair is discarded (see also the ``Fixed-Length Predictor'' in \cite{MoffatCDC2024}).
We denote the resulting $u_f$-dependent prediction mean by $\hat{y}_{f|t-1,u_f}$.
Following \cite{chiuso2025harnessing}, the $u_f$-dependent component of the covariance of $y_f$, conditioned on $\mathcal{D}_t^-$ and $u_f$, is
\begin{align*}
\Sigma_{y_f|t-1,u_f}
&=
M_t(u_f)\Sigma_{t|t-1}M_t(u_f)^\top,
\end{align*}
where $M_t(u_f)$ is affine in $u_f$.
This term therefore captures how uncertainty in the ARX coefficients propagates into the multi-step output prediction for a candidate $u_f$.

Using this prediction mean and covariance, the FCE \eqref{eq:sepPrinc} admits an approximate decomposition into the certainty-equivalent term
\begin{align}\label{eqn:JtDef}
J_t(u_{f}) &:= \|\hat y_{f|t-1,u_f} - y_r\|^2+\frac{1}{q}\|u_f\|^2, 
\end{align}
and a variance term 
\begin{align*}
r_t(u_f) &:= \operatorname{Tr}\!\big(\Sigma_{y_f|t-1,u_f}\big).
\end{align*}
The control sequence is therefore determined by
\begin{equation}\label{eq:FCEcostSISO}
u_f^\star \in \arg\min_{u_f}\; J_t(u_f)+r_t(u_f),
\end{equation}
which is an unconstrained, strictly convex quadratic program. Its unique solution is obtained by solving the linear system arising from the first-order optimality condition.

\section{Learning to forget: \\self-tuning of the adaptation speed}\label{sec:Adapt:hyperparam}

Following the empirical Bayes approach \cite{Maritz:1989}, the hyperparameter $\alpha$ that governs the evolution of the parameter $\theta$ can be estimated by maximizing the marginal likelihood, which is the likelihood of observed data once the dependence on $\theta_t$ has been marginalized to leave a Gaussian for the one-step predictive distribution of $y_t$. 
This marginalization is implicit in the Kalman Filter.

Conditioning on the data and previous hyperparameter estimates, the conditional negative log-likelihood of $y_t$ is
\begin{align*}
L_t(\alpha)
:=
-2 \log p_{\alpha}\!\left(
y_t \mid {\cal D}_t^-,\{\alpha_\tau\}_{\tau=1,\ldots,t-1}
\right).
\end{align*}
In our online setting, after $y_t$ becomes available, $\alpha_t$ is obtained by applying a Newton/gradient step to $L_t(\alpha)$. 
The value $\alpha_t$ is then used in the prediction step of the Kalman Filter \eqref{eq:Kalman:update} from time $t$ to $t+1$ to produce $\hat{\theta}_{t+1|t}$ and $\Sigma_{t+1|t}$.

\begin{lemma}\label{lem:LK}
Ignoring additive constants,
\begin{align}
L_t(\alpha)
&=
\log\det\!\big(\Lambda_t(\alpha)\big)
+
\epsilon_t(\alpha)^\top
\Lambda_t(\alpha)^{-1}
\epsilon_t(\alpha). \nonumber 
\end{align}

For SISO systems, 
\begin{align}\label{eq:MLL}
L_t(\alpha)
=
\log \Lambda_t(\alpha)
+
\frac{\epsilon_t(\alpha)^2}{\Lambda_t(\alpha)}.
\end{align}
\end{lemma}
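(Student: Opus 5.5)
The plan is to show that, conditional on ${\cal D}_t^-$ and the past hyperparameters, $y_t$ is Gaussian with mean $C_t\alpha\hat\theta_{t-1|t-1}$ and variance $\Lambda_t(\alpha)$. The lemma then follows by taking $-2\log$ of the Gaussian density.

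First, I would fix the conditioning. The values $\alpha_1,\dots,\alpha_{t-1}$ are given, so the Kalman recursion \eqref{eq:Kalman:update} up to the measurement update at time $t-1$ is a fixed, known computation. Only the final prediction step from $t-1$ to $t$ uses the free argument $\alpha$. By Proposition~\ref{prop:invarance} and Gaussianity of $w_t$ and $e_t$, the model \eqref{eq:ssKalman} is a linear-Gaussian state space model; $C_t$ is measurable with respect to ${\cal D}_t^-$, so conditionally it is a known matrix. The standard Kalman filter optimality result then gives
\[
\theta_{t-1}\mid {\cal D}_t^- \sim \mathcal{N}(\hat\theta_{t-1|t-1},\Sigma_{t-1|t-1}).
\]
Applying the process equation with parameter $\alpha$, where $w_{t-1}$ is independent of the past, gives
\[
\theta_t\mid{\cal D}_t^-\sim\mathcal{N}\big(\alpha\hat\theta_{t-1|t-1},\,\alpha^2\Sigma_{t-1|t-1}+(1-\alpha^2)Q\big).
\]

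Next, I would marginalize $\theta_t$ out of $y_t=C_t\theta_t+e_t$. Here $e_t$ is independent of $\theta_t$ and of ${\cal D}_t^-$. As an affine map of independent Gaussians, $y_t\mid{\cal D}_t^-$ is Gaussian with mean $C_t\alpha\hat\theta_{t-1|t-1}$. Its variance is $C_t(\alpha^2\Sigma_{t-1|t-1}+(1-\alpha^2)Q)C_t^\top+\sigma^2=\Lambda_t(\alpha)$.

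Writing the Gaussian density in output dimension $p$ gives
\[
-2\log p_\alpha(y_t\mid\cdot)=p\log(2\pi)+\log\det\Lambda_t(\alpha)+\epsilon_t(\alpha)^\top\Lambda_t(\alpha)^{-1}\epsilon_t(\alpha).
\]
Dropping the constant $p\log 2\pi$ yields the general formula. In the SISO case $\Lambda_t$ is a scalar, so $\log\det$ reduces to $\log$ and the quadratic form reduces to $\epsilon_t^2/\Lambda_t$, which is \eqref{eq:MLL}.

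The main point needing care is the first step: justifying that $\hat\theta_{t-1|t-1},\Sigma_{t-1|t-1}$ are the exact conditional moments even though $C_t$ depends on past outputs and inputs, since the inputs are closed-loop. I would handle this by conditioning sequentially. The regressor $C_\tau$ is determined by data up to $\tau-1$, and the inputs depend only on past data, not on future $\theta$ or $e$. The usual Bayesian update argument (prior times likelihood, completing the square) therefore goes through step by step, and exact Gaussianity is preserved at each step.
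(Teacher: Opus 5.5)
Your proposal is correct and follows the same route as the paper. The Kalman filter gives the Gaussian one-step predictive density of $y_t$ given ${\cal D}_t^-$, with mean $C_t\alpha\hat\theta_{t-1|t-1}$ and variance $\Lambda_t(\alpha)$; taking $-2\log(\cdot)$ and dropping the constant gives the general formula, and the SISO case follows from $\det\Lambda_t=\Lambda_t$. Your explicit derivation of the predictive moments and your sequential-conditioning argument for closed-loop regressors simply fill in what the paper's one-line appeal to the Kalman filter leaves implicit.
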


The function $L_t(\alpha)$ represents the \emph{per-sample} negative log-likelihood. Its gradient $\nabla L_t(\alpha)$ and expected Hessian
\begin{equation}\label{eq:Hessian}
H_{\alpha,t}
:=
\E\left[
\frac{\partial^2 L_t(\alpha)}{\partial\alpha^2}
\mid {\cal D}_t^-
\right]
\end{equation}
are computed at time $t$. The following lemma provides expressions for these quantities.
\begin{lemma}\label{lemm:gradients}
Let us define
\[
\bar \epsilon_t(\alpha)
:=
\Lambda_t^{-1}(\alpha)\,\epsilon_t(\alpha),
\]
\[
\Psi
:=
\Lambda_t^{-1}(\alpha)
-
\bar \epsilon_t(\alpha)\bar \epsilon_t(\alpha)^\top.
\]
Then
\[
\frac{\partial \Lambda_t}{\partial \alpha}
=
2\alpha C_t\left(\Sigma_{t-1|t-1}-Q\right)C_t^\top.
\]
The gradient w.r.t. $\alpha$ of the negative log-likelihood of the measurement $y_t$, conditioned on the data ${\cal D}_t^-$, is
\begin{align*}
\nabla L_t(\alpha)
=
\operatorname{Tr}\!\left[
\Psi \frac{\partial \Lambda_t}{\partial \alpha}
\right]
-
2 \hat \theta^\top_{t-1|t-1} C_t^\top \bar \epsilon_t(\alpha).
\end{align*}
The expected Hessian w.r.t. $\alpha$ is given by 
\begin{align*}
H_{\alpha,t} \! = \! \operatorname{Tr}\! \!\left[
\Lambda_t^{-1}\frac{\partial \Lambda_t}{\partial \alpha}\Lambda_t^{-1}\frac{\partial \Lambda_t}{\partial \alpha}
\right] \! \! + \! 2\hat\theta_{t-1|t-1}^\top C_t^\top \! \Lambda_t^{-1} C_t\hat\theta_{t-1|t-1}.
\end{align*}
For SISO systems, $\Lambda_t$, $\epsilon_t$, and $\bar \epsilon_t$ are scalars, so
\begin{align*}
\nabla L_t(\alpha)
&=
\frac{\partial \Lambda_t}{\partial \alpha}\,
\frac{\Lambda_t-\epsilon_t^2}{\Lambda_t^2}
+
\frac{2\epsilon_t}{\Lambda_t}\frac{\partial \epsilon_t}{\partial \alpha},\;\;
\\
\text{ and } \quad H_{\alpha,t} &= \left(\frac{1}{\Lambda_t}\frac{\partial \Lambda_t}{\partial \alpha}\right)^2 + \frac{2}{\Lambda_t} \left(C_t \hat\theta_{t-1|t-1}\right)^2.
\end{align*}
\end{lemma}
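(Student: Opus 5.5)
The plan is to differentiate the expression for $L_t(\alpha)$ from Lemma~\ref{lem:LK} directly, treating $\hat\theta_{t-1|t-1}$, $\Sigma_{t-1|t-1}$, $C_t$ and $y_t$ as fixed. This is legitimate because they are determined by ${\cal D}_t^-$, $y_t$ and the past hyperparameters $\{\alpha_\tau\}_{\tau<t}$, and none of them depends on the current $\alpha$.

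\textbf{Derivatives of the building blocks.} Both objects entering $L_t$ are explicit in $\alpha$:
\[
\Lambda_t(\alpha)=C_t\big(\alpha^2\Sigma_{t-1|t-1}+(1-\alpha^2)Q\big)C_t^\top+\sigma^2,
\qquad
\epsilon_t(\alpha)=y_t-\alpha C_t\hat\theta_{t-1|t-1}.
\]
Differentiating gives $\partial\Lambda_t/\partial\alpha=2\alpha C_t(\Sigma_{t-1|t-1}-Q)C_t^\top$, which is the first claim, and $\partial\epsilon_t/\partial\alpha=-C_t\hat\theta_{t-1|t-1}$. Both second derivatives are simple: $\partial^2\epsilon_t/\partial\alpha^2=0$ and $\partial^2\Lambda_t/\partial\alpha^2=2C_t(\Sigma_{t-1|t-1}-Q)C_t^\top$, a constant.

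\textbf{Gradient.} I will use the standard identities
\[
\partial\log\det\Lambda=\operatorname{Tr}(\Lambda^{-1}\partial\Lambda),
\qquad
\partial\Lambda^{-1}=-\Lambda^{-1}(\partial\Lambda)\Lambda^{-1}.
\]
Applying them to $\epsilon^\top\Lambda^{-1}\epsilon$ gives
\[
\nabla L_t=\operatorname{Tr}\!\big[(\Lambda_t^{-1}-\bar\epsilon_t\bar\epsilon_t^\top)\,\partial_\alpha\Lambda_t\big]+2\,\partial_\alpha\epsilon_t^\top\,\bar\epsilon_t .
\]
Substituting $\partial_\alpha\epsilon_t=-C_t\hat\theta_{t-1|t-1}$ yields the stated formula with $\Psi$. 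In the SISO case the trace is a scalar product, $\partial_\alpha\Lambda_t(\Lambda_t^{-1}-\epsilon_t^2/\Lambda_t^2)$, which rearranges to the displayed expression.

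\textbf{Expected Hessian.} I will differentiate the gradient once more. This produces
\[
\operatorname{Tr}\big[\partial_\alpha\Psi\,\partial_\alpha\Lambda_t\big]
+\operatorname{Tr}\big[\Psi\,\partial^2_\alpha\Lambda_t\big]
+2\,\partial_\alpha\epsilon_t^\top\partial_\alpha\bar\epsilon_t ,
\]
and no $\partial^2_\alpha\epsilon_t$ term appears since that derivative vanishes. The conditional expectation is taken with respect to $y_t\mid{\cal D}_t^-$. Under the model, $\epsilon_t$ is the Kalman innovation, so it has zero mean and variance $\Lambda_t$ at the evaluated $\alpha$. Hence $\E[\epsilon_t\epsilon_t^\top]=\Lambda_t$ and $\E[\Psi]=\Lambda_t^{-1}-\Lambda_t^{-1}\Lambda_t\Lambda_t^{-1}=0$, which kills the $\partial^2_\alpha\Lambda_t$ term. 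Any term linear in $\epsilon_t$ has zero expectation.

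The remaining terms are as follows.
\begin{itemize}
\item From $\partial_\alpha\Psi$: the pieces $-\Lambda^{-1}\partial\Lambda\Lambda^{-1}$ and $+2\,\Lambda^{-1}\partial\Lambda\Lambda^{-1}\epsilon\epsilon^\top\Lambda^{-1}$ (symmetrized), plus cross terms linear in $\epsilon$ that involve $\partial\epsilon$. Their expectation is $\operatorname{Tr}[\Lambda^{-1}\partial\Lambda\Lambda^{-1}\partial\Lambda]$.
\item From $\partial_\alpha\bar\epsilon_t=\Lambda_t^{-1}\partial_\alpha\epsilon_t-\Lambda_t^{-1}\partial_\alpha\Lambda_t\bar\epsilon_t$: the first part gives the deterministic term $2\hat\theta^\top C_t^\top\Lambda_t^{-1}C_t\hat\theta$, and the second part is linear in $\epsilon$ and averages out.
\end{itemize}
Together these give the Fisher-information form of $H_{\alpha,t}$. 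The SISO specialization is then immediate.

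\textbf{Main obstacle.} The delicate step is bookkeeping the Hessian terms and making sure exactly the terms odd in $\epsilon_t$ or proportional to $\Psi$ vanish in expectation. This needs the justification that $\epsilon_t\sim\mathcal N(0,\Lambda_t(\alpha))$ given ${\cal D}_t^-$. That holds because the Kalman filter \eqref{eq:Kalman:update} with the Gaussian prior gives the exact predictive distribution, as argued after Proposition~\ref{prop:invarance}, provided the expectation is evaluated at the same $\alpha$. I would state this explicitly rather than grinding through every cross term.
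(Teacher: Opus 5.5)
Your proposal is correct and follows essentially the same route as the paper. You differentiate the log-det/quadratic form of Lemma~\ref{lem:LK} using $\partial_\alpha\epsilon_t=-C_t\hat\theta_{t-1|t-1}$. You then take the conditional expectation of the derivative of the gradient, using $\E[\epsilon_t\mid{\cal D}_t^-]=0$ and $\E[\epsilon_t\epsilon_t^\top\mid{\cal D}_t^-]=\Lambda_t$, so that $\E[\Psi\mid{\cal D}_t^-]=0$, the terms linear in $\epsilon_t$ vanish, and the two Fisher-type terms remain. Your remark that these moment identities require evaluating the predictive law at the same $\alpha$, with $\hat\theta_{t-1|t-1}$, $\Sigma_{t-1|t-1}$ and $C_t$ being ${\cal D}_t^-$-measurable, states explicitly an assumption the paper uses implicitly.
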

In order to guarantee that $\alpha_t$ satisfies the constraints $\alpha_t \in [0,1]$ at all times, a re-parametrization such as the one described in Appendix \ref{sec:reparam} can be employed.

Ideally, one would update the hyperparameter $\alpha$ by minimizing the population form of \eqref{eq:MLL}, namely the expected one-step negative log-likelihood under the distribution of $y_t$. 
In the online setting, however, only a single realization $y_t$ is observed at each time $t$, and the hyperparameter update must be based on the corresponding single-sample loss, which is a noisy, Monte Carlo estimate of the population objective.
To address the noise that arises due to the single-sample, online context, we propose an exponentially-weighted gradient formulation for the hyperparameter tuning that exploits the following recursions

\begin{equation}\label{eq:Adapt:hyper}
\begin{array}{rcl}
\alpha_{t} & = & \alpha_{t-1} - \bar H_{t}^{-1} \bar \nabla L_t 
\end{array}
\end{equation}
where $\bar \nabla L_t$ and $\bar H_{t}$ are exponentially-weighted averages of $\nabla L_t(\alpha)$ and $H_{\alpha,t}$ with forgetting-factor $\beta$:
\begin{equation}\label{eq:avg:grad}
\begin{array}{rcl}
\bar \nabla L_t & = & (1-\beta)\bar \nabla L_{t-1} + \beta  \nabla L_t(\alpha)_{|\alpha = \alpha_{t-1}}, \\
\bar H_{t} & = & (1-\beta)\bar H_{t-1} + \beta  \left[H_{\alpha,t}\right]_{|\alpha = \alpha_{t-1}}. 
\end{array}
\end{equation}
A Newton/Hessian-based update is used for $\alpha$, as $\alpha$ must adapt quickly in rapidly-time-varying applications.
If desired, $\beta$ and $\sigma^2$ could be adapted based on data in a similar manner to $\alpha$, possibly exploiting time-scale  separation as in \cite{Borkar}. 

\begin{lemma}\label{lem:expFilter}
For some parameter vector $\nu \in \mathbb{R}^d$, e.g., $\nu=(\alpha)$, let $\{\nu_t\}_{t\ge 1}$ denote the sequence of iterates, and let the exponentially-filtered gradient be defined by
\begin{align*}
\bar \nabla L_t
:=
(1-\beta)\bar \nabla L_{t-1}
+
\beta \nabla L_t(\nu_{t-1}), \quad \beta \in [0,1].
\end{align*}
Define the exponentially-weighted average of the  linearized surrogate 
$$
\nonumber
\begin{array}{rcl}
\bar L_t(\nu)
&:= &
\sum_{\tau=1}^t
\beta(1-\beta)^{t-\tau} \widetilde L_\tau(\nu) \\
\widetilde L_\tau(\nu) & :=  &
\Bigl[
L_\tau(\nu_{\tau-1})
+
\nabla L_\tau(\nu_{\tau-1})^\top(\nu-\nu_{\tau-1})
\Bigr].
\end{array}
$$
If $\bar \nabla L_0=0$, then, for all $t$,
$
\bar \nabla L_t= \nabla \bar L_t(\nu)_{|
\nu = \nu_{t-1}
}.
$
\end{lemma}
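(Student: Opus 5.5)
The plan is to compute both sides of the claimed identity explicitly and show that they coincide. The key observation is that each linearized surrogate $\widetilde L_\tau(\nu)$ is affine in $\nu$. Its gradient is therefore the constant vector $\nabla L_\tau(\nu_{\tau-1})$, independent of the point at which it is evaluated, since the terms $L_\tau(\nu_{\tau-1})$ and $-\nabla L_\tau(\nu_{\tau-1})^\top\nu_{\tau-1}$ do not depend on $\nu$. By linearity of differentiation applied to the finite sum defining $\bar L_t$,
\begin{align*}
\nabla \bar L_t(\nu) = \sum_{\tau=1}^t \beta(1-\beta)^{t-\tau}\,\nabla L_\tau(\nu_{\tau-1}) \quad \text{for every } \nu\in\mathbb{R}^d .
\end{align*}
In particular this holds at $\nu=\nu_{t-1}$, so the evaluation point in the statement plays no role.

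Next we unroll the recursion for $\bar \nabla L_t$ and show by induction on $t$ that
\begin{align*}
\bar\nabla L_t = \sum_{\tau=1}^t \beta(1-\beta)^{t-\tau}\,\nabla L_\tau(\nu_{\tau-1}).
\end{align*}
For the base case $t=1$, the hypothesis $\bar\nabla L_0=0$ gives $\bar\nabla L_1=\beta\nabla L_1(\nu_0)$, which matches the formula. For the inductive step, assume the formula holds at $t-1$ and substitute it into the recursion:
\begin{align*}
\bar\nabla L_t = (1-\beta)\sum_{\tau=1}^{t-1}\beta(1-\beta)^{t-1-\tau}\nabla L_\tau(\nu_{\tau-1}) + \beta\nabla L_t(\nu_{t-1}).
\end{align*}
The factor $(1-\beta)$ raises each exponent $t-1-\tau$ to $t-\tau$, and the last term supplies the $\tau=t$ summand, so the formula holds at $t$.

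Comparing the two displayed sums gives the claim for every $t$. There is no real obstacle here. The only points needing care are the role of the initialization $\bar\nabla L_0=0$ and the edge values of $\beta\in[0,1]$. Without the zero initialization, an extra term $(1-\beta)^t\bar\nabla L_0$ would remain. For $\beta\in\{0,1\}$ the formulas still hold with the convention $0^0=1$: when $\beta=0$ both sides are zero, and when $\beta=1$ only the $\tau=t$ term survives.
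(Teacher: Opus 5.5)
Your proposal is correct and follows the same route as the paper. Both differentiate the affine surrogates to get $\nabla \bar L_t(\nu)=\sum_{\tau=1}^t \beta(1-\beta)^{t-\tau}\nabla L_\tau(\nu_{\tau-1})$, then match this to the recursion unrolled by induction from $\bar\nabla L_0=0$; you just spell out more explicitly that the evaluation point $\nu_{t-1}$ is irrelevant and check the edge cases $\beta\in\{0,1\}$.
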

Lemma \ref{lem:expFilter} shows that the exponentially-weighted gradient \eqref{eq:avg:grad} is exactly the gradient of the exponentially-weighted linearized surrogate $\bar L_t(\nu)$. 
By averaging information from past samples, $\bar\nabla L_t$ mitigates the variability of the single-sample gradient $\nabla L_t$, at the cost of temporal smoothing. 

\section{Proposed Data-Driven Predictive Control Algorithm}

Our proposed algorithm, implemented at each time step, is
\begin{algorithm}[H]
\caption{Algorithm 1: Recursive, FCE-minimizing DPC}
\label{alg:recursiveFCE}
\footnotesize
\begin{algorithmic}[1]
\State Using $\hat{\theta}_{t|t-1}$ and $\Sigma_{t|t-1}$, determine $u_f^\star$ by solving \eqref{eq:FCEcostSISO}.
\State Apply the first entry of $u_f^\star$ as $u_t$ and measure $y_t$.
\State Update the hyperparameter $\alpha_t$ by taking the Newton step \eqref{eq:Adapt:hyper}.
\State Using $C_t$ and $y_t$, iterate the Kalman filter \eqref{eq:Kalman:update}. 
\State Set $t\leftarrow t+1$ and repeat.
\end{algorithmic}
\end{algorithm}
At each time step, Algorithm \ref{alg:recursiveFCE} determines the sequence of future input actions $u_f^\star$ by minimizing the FCE through \eqref{eq:FCEcostSISO}, where the FCE is the posterior expectation of the cost $\mathcal{J}(u_f,y_f)$ given the data $\mathcal{D}_t^-$.
The prior on the ARX coefficients is zero-mean with covariance $Q$, 
which can encode characteristics such as smoothness and/or stability.
The predicted mean $\hat{\theta}_{t|t-1}$ and covariance $\Sigma_{t|t-1}$ of the ARX coefficients, obtained from the previous posterior via the $\alpha_t$-dependent Kalman prediction step in \eqref{eq:Kalman:update}, parameterize the FCE.

\section{Results}

In order to validate the proposed algorithm, we consider a time-varying dynamical system
$$
y_t = \sum_{k=1}^{\infty} h_{t,k} u_{t-k} + g_{t,k} n_{t-k}, 
$$
where $u_t$ is the controlled input and $n_t$ is a zero-mean white noise sequence. 
The time-dependent impulse responses $\{h_{t,k}\}$ and $\{g_{t,k}\}$ are obtained by interpolating between two linear dynamical systems using a hold--ramp-up--hold--ramp-down--hold scheduling profile, as described in the simulation protocol above.
Notably, System 1 is low pass with a static gain of roughly $40dB$, while System 2 is resonant with a static gain of roughly $5dB$, implying a significant change in their dynamic behavior.

\begin{algorithm}[t] 
\caption{Protocol used to generate the time-varying test system}
\label{sim:protocol}
\footnotesize
\begin{algorithmic}[1]

\State Form endpoint poles/zeros:
\Statex 
\(N_1 = [-0.1+0.1i, -0.1-0.1i],\quad N_2 = [0.1+0.95i, 0.1-0.95i]\)
\Statex 
\(D_1 = [0.9+0.01i, 0.9-0.01i],\quad D_2 = [0.6+0.6i, 0.6-0.6i]\)

\State Construct the scalar schedule $s_t$ for $t=1,\ldots,1000$:
\Statex \hspace{\algorithmicindent}
\(s_t = 0\) for samples \(1\) to \(100\)
\Statex \hspace{\algorithmicindent}
\(s_t\) ramps linearly from \(0\) to \(1\) over samples \(101\) to \(400\)
\Statex \hspace{\algorithmicindent}
\(s_t = 1\) for samples \(401\) to \(500\)
\Statex \hspace{\algorithmicindent}
\(s_t\) ramps linearly from \(1\) to \(0\) over samples \(501\) to \(800\)
\Statex \hspace{\algorithmicindent}
\(s_t = 0\) for samples \(801\) to \(1000\)

\State For each \(t=1,\dots,1000\), set (zeros and poles respectively)
\Statex \hspace{\algorithmicindent}
\(N_H(t,:) = (1-s_t)N_1 + s_t N_2\)
\Statex \hspace{\algorithmicindent}
\(D_H(t,:) = (1-s_t)D_1 + s_t D_2\)
\Statex \hspace{\algorithmicindent}
\(N_G(t,:) = 2\)
\Statex \hspace{\algorithmicindent}
\(D_G(t,:) = D_H(t,:)\)

\end{algorithmic}
\end{algorithm}

The initial $\alpha_0=.9999$. 
The control is designed to track the piecewise-constant reference shown in Fig.~\ref{fig:tracking}.  
At each time step $t$, the applied control is
$$u_t = u_t^\star + v_t$$
where $u^\star_t$ is the first entry of the FCE-minimizing $u^\star_f$, and $v_t$ is zero-mean noise with variance $0.01$ added to promote persistent excitation. The excitation variance trades off parameter identifiability and control performance.
Using $200$ off-line data points, $\sigma^2$ is set to the sample variance of the prediction errors. 
The parameter $q$ in \eqref{eq:cost} is $10^4$.
In addition, the smoothing parameter for the gradients and Hessians (see \eqref{eq:avg:grad}) is $\beta = 1/10$. 
The truncation length in \eqref{eq:predictorTV_ARX} is  $\hat \rho = 10$ and the control horizon $T=10$.

Our formulation requires an externally-provided prior for the parameter $\theta$, which determines the variance $Q$ of the noise in \eqref{eq:ssprior}, as stated in Proposition \ref{prop:invarance}.
We choose to use a DC prior \cite{pillonetto2022regularized}, which favors impulse-responses that are stable and smooth.
The DC prior hyperparameters are chosen by generating $100$ different, uniformly-sampled hyperparameter sets and then averaging their corresponding variances.


Figs.~\ref{fig:tracking} and \ref{fig:tracking2} report the tracking performance, demonstrating the importance of using regularization $r_t(u_f)$ in \eqref{eq:FCEcostSISO}. 
When the regularization is turned off, instabilities arise. 
\begin{figure}[h]
    \centering
    \includegraphics[width=0.99\columnwidth]{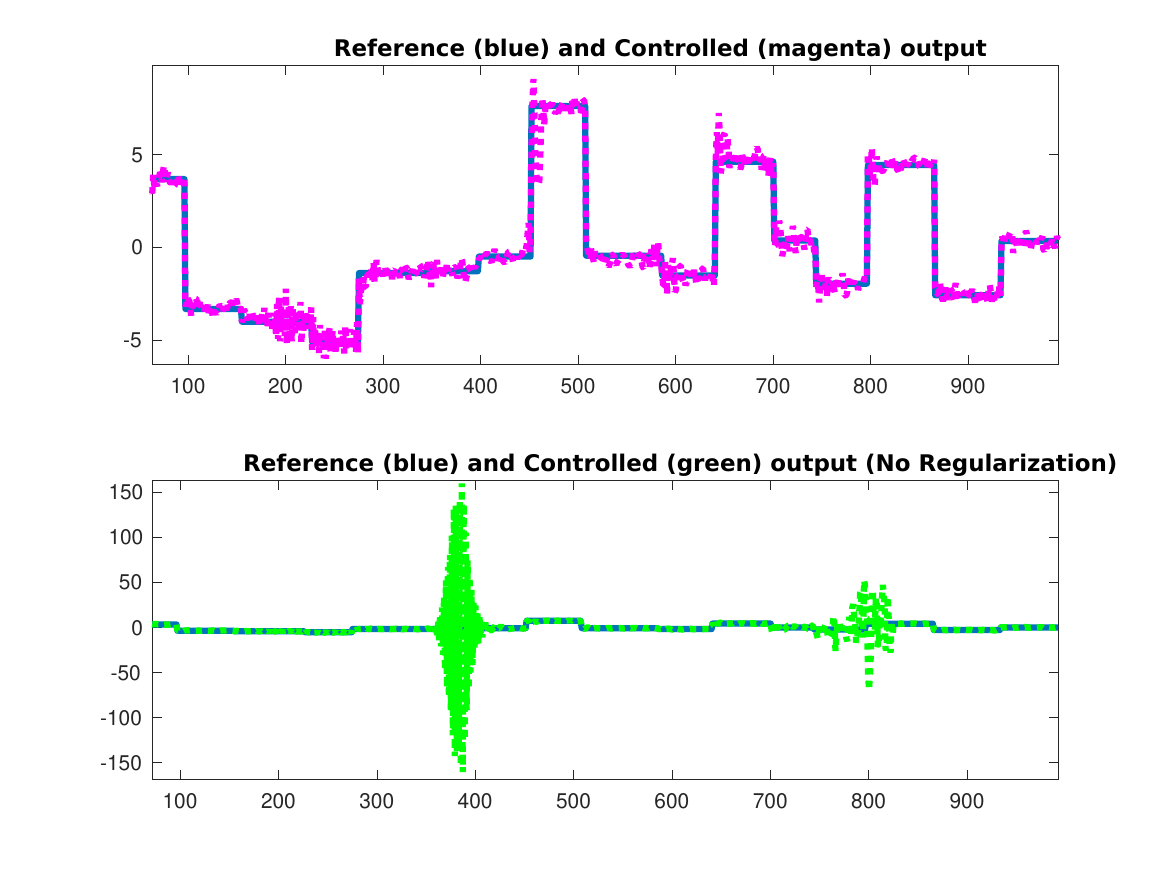}
    \caption{Reference (solid) vs controlled output (dotted): Top  with regularization, bottom  without  regularization $r_t(u_f)$.  
    }
    \label{fig:tracking}
\end{figure}
\begin{figure}[h]
    \centering
    \includegraphics[width=0.99\columnwidth]{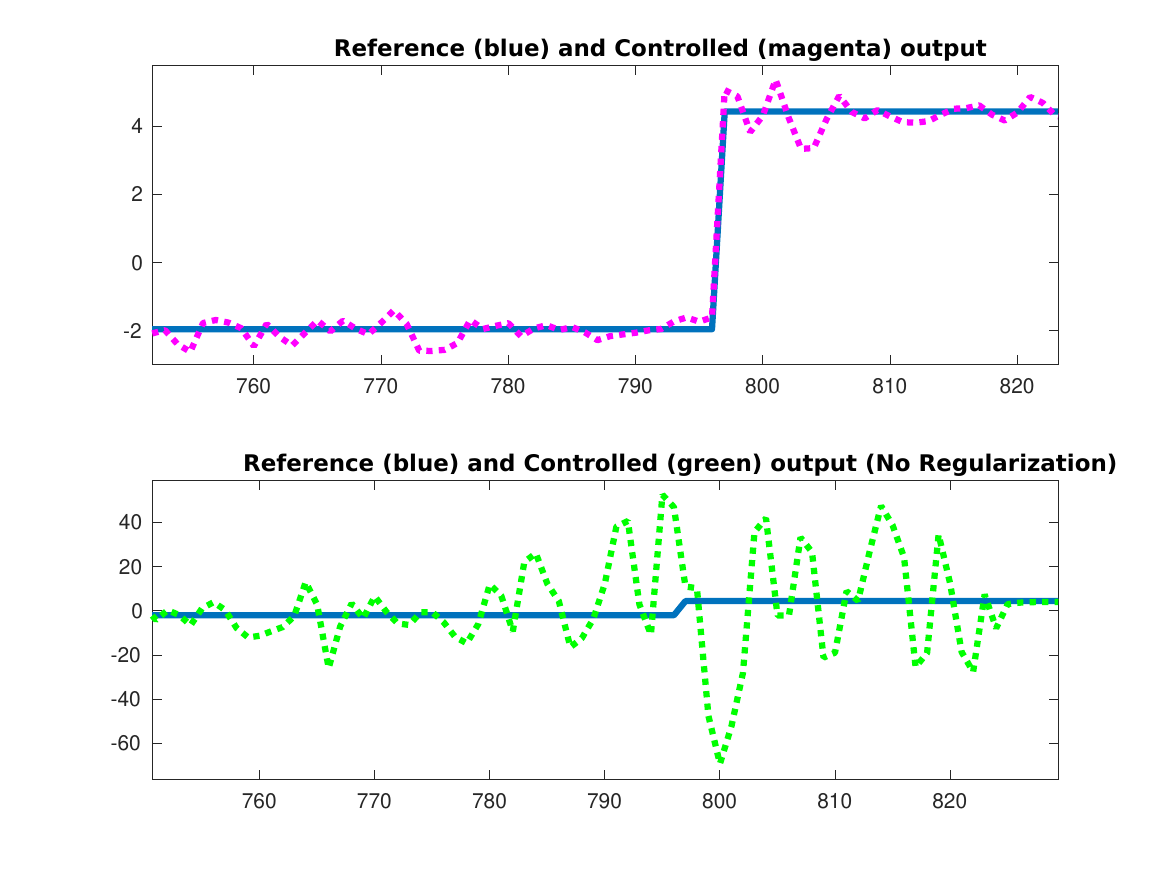}
    \caption{A zoom-in of Fig. \ref{fig:tracking} around time 800}
    \label{fig:tracking2}
\end{figure}

Figure \ref{fig:boxplots} shows running averages (exponentially weighted average) of the cost 
$\mathcal{J}(u_f,y_f)$ in \eqref{eq:cost}. 
On the left the FCE is optimized, whereas on the right the certainty equivalent objective is optimized. 
Once again, these results show the importance of regularization. 

\begin{figure}[h]
    \centering
\includegraphics[width=0.99\columnwidth]{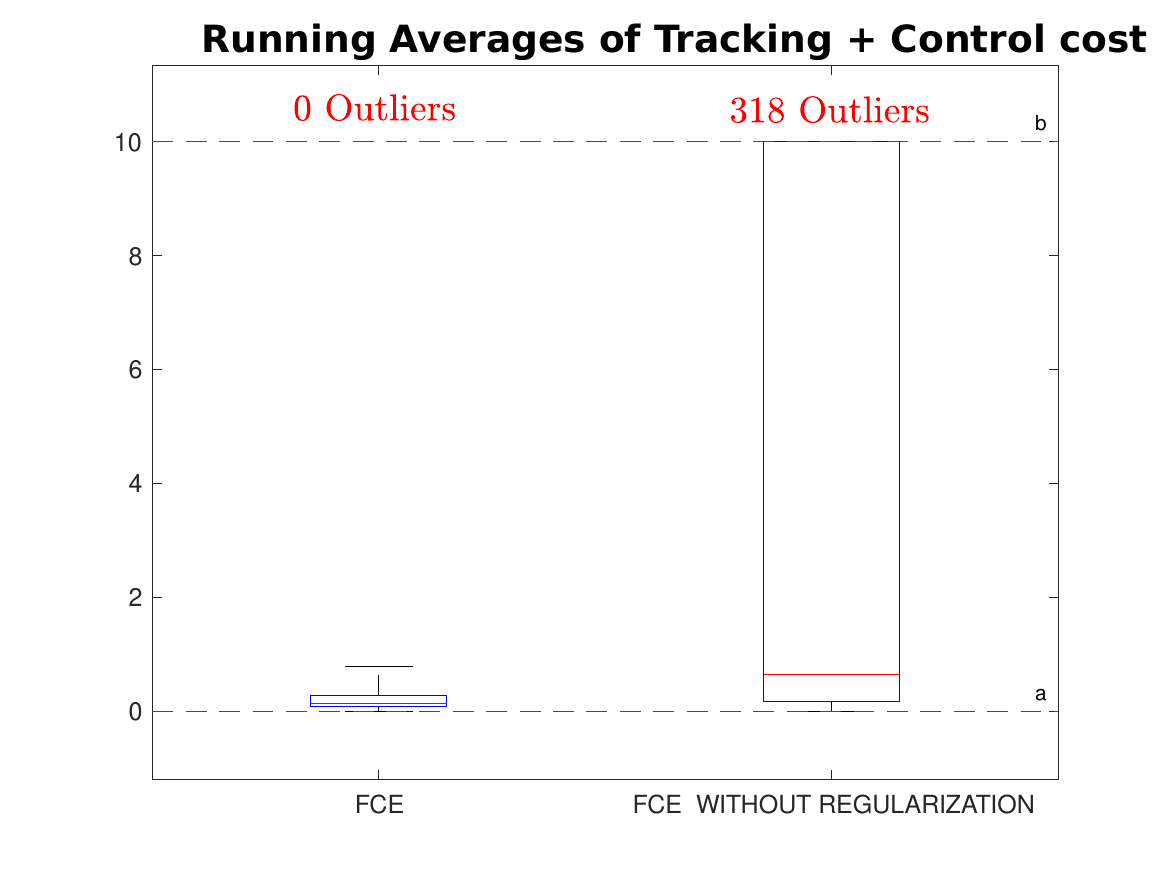}
    \caption{Boxplots of the running average of the cost $\mathcal{J}(u_f,y_f) = \|y_f - y_r\|^2 +  \frac{1}{q}\| u_f\|^2$ when the FCE-minimizing controller is run with regularization (left) and without (right).} 
    \label{fig:boxplots}
\end{figure}

\section{Conclusion}
This paper investigates an invariant, online Data-Driven Predictive Control algorithm that simultaneously estimates the ARX parameters that model the system, the ARX parameter uncertainty, and the process equation to propagate these parameter beliefs forward in time. 
The parameter-propagating process equation is designed to be invariant, maintaining the system parameter priors indefinitely. 
Both the ARX parameter and its uncertainty are used to determine the control action, which incorporates the parameter uncertainty as a regularization term.
Simulation results are promising and warrant further investigation. 
Future research directions will investigate directional forgetting and alternative, invariant process equations. 

\appendices

\section{intuition for the process equation adaptation}


Changing \(\alpha\) changes both the predicted mean and the predicted uncertainty of $\theta_t$. Thus, the $\alpha$-gradient has mean-fit and variance-fit components. 
For the SISO case,
\begin{align*}
\frac{\partial L_t}{\partial \alpha}
&=
\underbrace{
\frac{\partial \Lambda_t}{\partial \alpha}\,
\frac{\Lambda_t-\epsilon_t^2}{\Lambda_t^2}
}_{\text{variance-fit}}
+
\underbrace{
\frac{2\epsilon_t}{\Lambda_t}\frac{\partial \epsilon_t}{\partial \alpha}
}_{\text{mean-fit}}
\end{align*}
The mean-fit term reflects how $\alpha$ changes the predicted output itself. Because $\epsilon_t$ is the difference between the observed output and the one-step prediction, this term drives $\alpha$ in the direction that reduces that prediction error.

If, for example,  $\epsilon_t > 0$, then the prediction was too low. If, in addition, $\partial \epsilon_t / \partial \alpha > 0$, then increasing $\alpha$ makes $\epsilon_t$ larger. Since the mean-fit term is then positive, gradient descent decreases $\alpha$, moving in the direction that reduces $\epsilon_t$.

The variance-fit term adjusts \(\alpha\) so that the one-step predictive variance, computed from the Kalman filter covariance under \eqref{eq:ssprior}, better matches the observed innovation magnitude.

If, for example, \(\epsilon_t^2 > \Lambda_t\), then the realized innovation was larger than predicted. Furthermore, if, as is often the case, 
$
\frac{\partial \Lambda_t}{\partial \alpha}<0,
$
then the variance-fit term is positive and gradient-descent steps decrease \(\alpha\). 
This action is intuitive, as decreasing \(\alpha\) favors a model in which \(\theta_t\) varies more rapidly in time and, presuming \(\frac{\partial \Lambda_t}{\partial \alpha}<0\), also increases the one-step predictive variance \(\Lambda_t\).

\section{Proofs}

\subsection{Proof of Lemma \ref{lem:LK}:} 
Applying the Kalman filter to the gaussian $\theta_t$, we get
\begin{align*}
&p_{\alpha}\!\left(
y_t \mid {\cal D}_t^-,\{\alpha_\tau\}_{\tau=1,\ldots,t-1}
\right) \\
& \propto
\det\!\big(\Lambda_t(\alpha)\big)^{-1/2}
\exp\!\left(
-\frac12 \epsilon_t(\alpha)^\top
\Lambda_t(\alpha)^{-1}
\epsilon_t(\alpha)
\right).
\end{align*}
The general result follows from taking \(-2\log(\cdot)\) of this likelihood and dropping additive constants. 

\noindent For SISO, \(\Lambda_t(\alpha)\) and $\epsilon_t(\alpha)$ are scalar and
$
\det\!\big(\Lambda_t(\alpha)\big)
=
\Lambda_t(\alpha).
$

\subsection*{Proof of Lemma \ref{lemm:gradients}:} 
From Lemma \ref{lem:LK},
\begin{align*}
L_t(\alpha)
=
\log\det \Lambda_t
+
\epsilon_t^\top \Lambda_t^{-1}\epsilon_t.
\end{align*}
Using
\begin{align*}
\frac{\partial}{\partial x}\log\det \Lambda_t
&=
\operatorname{Tr}\!\left(
\Lambda_t^{-1}\frac{\partial \Lambda_t}{\partial x}
\right),\\
\frac{\partial}{\partial x}\bigl(\epsilon_t^\top \Lambda_t^{-1}\epsilon_t\bigr)
&=
2\left(\frac{\partial \epsilon_t}{\partial x}\right)^\top \Lambda_t^{-1}\epsilon_t
-
\epsilon_t^\top \Lambda_t^{-1}
\frac{\partial \Lambda_t}{\partial x}
\Lambda_t^{-1}\epsilon_t,
\end{align*}
we get
\begin{align*}
\frac{\partial L_t}{\partial x}
&=
\operatorname{Tr}\!\left[
\left(\Lambda_t^{-1}-\bar \epsilon_t\bar \epsilon_t^\top\right)
\frac{\partial \Lambda_t}{\partial x}
\right]
+
2\left(\frac{\partial \epsilon_t}{\partial x}\right)^\top \bar \epsilon_t \\
&=
\operatorname{Tr}\!\left[
\Psi \frac{\partial \Lambda_t}{\partial x}
\right]
+
2\left(\frac{\partial \epsilon_t}{\partial x}\right)^\top \bar \epsilon_t.
\end{align*}
Setting \(x=\alpha\), and using
\begin{align*}
\frac{\partial \epsilon_t}{\partial \alpha}
&=
-\,C_t\hat\theta_{t-1|t-1},\\
\frac{\partial \Lambda_t}{\partial \alpha}
&=
2\alpha C_t(\Sigma_{t-1|t-1}-Q)C_t^\top,
\end{align*}
gives the gradient w.r.t. $\alpha$.

To derive the expression for the expected Hessian $H_{\alpha,t}$ we use the relations
$$
\begin{array}{rcl}
\E\left[\epsilon_t(\alpha)|{\cal D}_t^-\right] = 0 \\
\E\left[\epsilon_t(\alpha)\epsilon_t^\top(\alpha)|{\cal D}_t^-\right] = \Lambda_t(\alpha)
\end{array}
$$
The second relation also implies that 
$$
\E\left[\bar\epsilon_t(\alpha)\bar\epsilon_t^\top(\alpha)|{\cal D}_t^-\right] = \Lambda_t^{-1}(\alpha)
$$
which, in turn, implies that
$$
\E\left[\Psi|{\cal D}_t^-\right] = 0. $$
Taking the derivative of
$$
\nabla L_t(\alpha)
=
\operatorname{Tr}\!\left[
\Psi \frac{\partial \Lambda_t}{\partial \alpha}
\right]
-
2 \hat \theta^\top_{t-1|t-1} C_t^\top \bar \epsilon_t(\alpha),
$$ gives
$$\begin{array}{rcl}
\frac{\partial}{\partial \alpha} \nabla L_t(\alpha) & = & \operatorname{Tr}\left[\frac{\partial \Psi}{\partial \alpha}\frac{\partial \Lambda_t}{\partial \alpha}+ \Psi \frac{\partial^2 \Lambda_t}{\partial \alpha^2}\right] + \\ & & + 2 \hat \theta^\top_{t-1|t-1} C_t^\top \Lambda^{-1}_t C_t \hat \theta_{t-1|t-1} + \\ & & +2 \hat\theta^\top_{t-1|t-1}  C_t^\top \Lambda^{-1}_t \frac{\partial \Lambda_t}{\partial \alpha} \bar \epsilon_t(\alpha)
\end{array}
$$
Now observe that 
$$\begin{array}{rcl}
\E \left[\Psi \frac{\partial^2 \Lambda_t}{\partial \alpha^2} |{\cal D}_t^-\right] 
& = & \E \left[\Psi   |{\cal D}_t^-\right]\frac{\partial^2 \Lambda_t}{\partial \alpha^2} \\
& = & 0
\end{array}
$$ and also 
$$\begin{array}{l}
\E \left[\hat\theta^\top_{t-1|t-1}  C_t^\top \Lambda^{-1}_t \frac{\partial \Lambda_t}{\partial \alpha} \bar \epsilon_t(\alpha) |{\cal D}_t^-\right]  \\
 = \hat\theta^\top_{t-1|t-1}  C_t^\top \Lambda^{-1}_t \frac{\partial \Lambda_t}{\partial \alpha}  \E \left[\bar \epsilon_t(\alpha)   |{\cal D}_t^-\right]  \\
 =  0
\end{array}
$$
In both cases the first equality  uses the fact that 
$\frac{\partial^2 \Lambda_t}{\partial \alpha^2}$ and $\hat\theta^\top_{t-1|t-1}  C_t^\top \Lambda^{-1}_t \frac{\partial \Lambda_t}{\partial \alpha}$ are  measurable w.r.t. ${\cal D}^-_t$.
Similarly, 
$$\begin{array}{rcl}
\operatorname{Tr}\left[\frac{\partial \Psi}{\partial \alpha}\frac{\partial \Lambda_t}{\partial \alpha}\right] & = &  
-2\operatorname{Tr}\left[\Lambda_t^{-1} \frac{\partial \Lambda_t}{\partial \alpha}\Psi \frac{\partial \Lambda_t}{\partial \alpha}\right] \\
& & -2 \operatorname{Tr}\left[\Lambda_t^{-1} \frac{\partial \epsilon_t}{\partial \alpha} \epsilon_t^\top \Lambda_t^{-1}  \frac{\partial \Lambda_t}{\partial \alpha}\right] \\
& & + \operatorname{Tr}\left[\Lambda_t^{-1} \frac{\partial \Lambda_t}{\partial \alpha}  \Lambda_t^{-1}  \frac{\partial \Lambda_t}{\partial \alpha}\right]
\end{array}
$$
and computing  the conditional expected value the first and second terms on the right hand side vanish, thus leading to 
$$
\begin{array}{rcl}
H_{\alpha,t}& := &  \E\left[\frac{\partial}{\partial \alpha} \nabla L_t(\alpha)|{\cal D}_t^-\right] \\
& = & \operatorname{Tr}\left[\Lambda_t^{-1} \frac{\partial \Lambda_t}{\partial \alpha}  \Lambda_t^{-1}  \frac{\partial \Lambda_t}{\partial \alpha}\right] \\ & & + 2 \hat \theta^\top_{t-1|t-1} C_t^\top \Lambda^{-1}_t C_t \hat \theta_{t-1|t-1}
\end{array}
$$

For SISO systems,
\begin{align*}
\Psi
=
\Lambda_t^{-1}-\bar \epsilon_t^2
=
\frac{\Lambda_t-\epsilon_t^2}{\Lambda_t^2},
\end{align*}
and the stated scalar formulas follow with algebra.

\subsection{Proof of Lemma \ref{lem:expFilter}:} 
\noindent 
Taking the gradient of $\bar L_t(\nu)$ w.r.t. $\nu$ we obtain
$
 \nabla \bar L_t (\nu)
= 
\sum_{\tau=1}^t
\beta(1 - \beta)^{t-\tau}\nabla L_\tau(\nu_{\tau-1})
$
and the results follow by induction by setting $\bar \nabla L_0 =0$.

\noindent Taking the gradient of $\widetilde L_t(\nu)$ w.r.t. $\nu$, the result follows. 

\section{Re-parameterization of $\alpha$}\label{sec:reparam}

Since $\alpha$ is constrained, we apply the following re-parameterizations when taking gradient steps.
Rather than adapting $\alpha$ directly, we adapt $\eta$,   where
$\alpha = a + b \frac{1}{1+e^{-\eta}}$. Ensuring $\eta \in [-1,1]$,
avoids vanishing gradients when adjusting $\eta$ online. 
While the technically allowed range for $\alpha$ is $[0,1]$, for the slowly varying systems considered here we restrict it to $[0.9,1]$. Faster variations may require a smaller lower bound. We choose $a$ and $b$ such that
$
1 = a + b \frac{1}{1+e^{-1}} \text{, and } 
.9 = a + b\frac{1}{1+e^{1}}.
$
$\eta$ is updated by taking a Newton step and then projecting the result on the box constraint $[-1,1]$.

\bibliographystyle{IEEEtran}
\bibliography{bibliography}

\end{document}